\tikzset{every node/.style={shape=circle,fill=black,inner sep=0pt, minimum size=0.5em,anchor=mid}}
\newtheorem{theorem}{Theorem}
\newtheorem{corollary}[theorem]{Corollary}
\newtheorem{lemma}[theorem]{Lemma}
\newtheorem{proposition}[theorem]{Proposition}
\theoremstyle{definition}
\title{Sketching semidefinite programs for faster clustering}
\author{Dustin~G.~Mixon\footnote{Department of Mathematics, The Ohio State University, Columbus, OH} \and Kaiying Xie\footnote{Department of Electrical and Computer Engineering, The Ohio State University, Columbus, OH}}
\date{}
\begin{document}
\maketitle

\begin{abstract}
Many clustering problems enjoy solutions by semidefinite programming.
Theoretical results in this vein frequently consider data with a planted clustering and a notion of signal strength such that the semidefinite program exactly recovers the planted clustering when the signal strength is sufficiently large.
In practice, semidefinite programs are notoriously slow, and so speedups are welcome.
In this paper, we show how to sketch a popular semidefinite relaxation of a graph clustering problem known as minimum bisection, and our analysis supports a meta-claim that the clustering task is less computationally burdensome when there is more signal.
\end{abstract}

\section{Introduction}

In many data science applications, one is tasked with partitioning objects into clusters so that members of a common cluster are more similar than members of different clusters.
For example, given a graph, one might cluster the vertices in such a way that most edges are between vertices from a common cluster.
At the same time, one ought to ensure that clusters are appropriately balanced in size, since otherwise a cluster could degenerate to a single vertex.
There are a variety of graph clustering objectives that simultaneously penalize edges across clusters and varying sizes of clusters, and for each objective, there are families of graphs for which finding an optimal clustering appears to be computationally difficult.
For example, given a graph with an even number of vertices, one might bisect the vertex set in such a way that minimizes the number of edges across clusters.
This \textit{minimum bisection problem} is known to be $\mathsf{NP}$-hard~\cite{GareyJ:79}.

Recently, it has been popular to demonstrate instances of the meta-claim that \textit{clustering is only difficult when it doesn't matter}.
For example, while one can encode hard instances of the traveling salesman problem as instances of minimum bisection, these sorts of instances would never appear in the context of real-world data science.
Rather, a data scientist will cluster data that is meant to be clustered.
With this perspective in mind, researchers have studied how a variety of clustering algorithms perform for datasets with a planted clustering structure.
For example, one might attempt to solve the minimum bisection problem for random graphs drawn from the \textit{stochastic block model}, in which edges are drawn within planted communities at a higher rate than edges across planted communities.
In this setting, it was shown in~\cite{AbbeBH:15,Bandeira:18} that a slight modification of the Goemans--Williamson semidefinite program~\cite{GoemansW:95} exactly recovers the planted clustering whenever it is information theoretically feasible to do so; see Proposition~\ref{prop.balanced sbm} for details.
Similar approaches have treated other semidefinite programming--based clustering algorithms in various settings~\cite{AwasthiEtal:15,IguchMPV:17,MixonVW:17,LiLLSW:20,LingS:19}.

Many of these results take the following form:
``Given enough signal, the semidefinite program exactly recovers the planted clusters.''
In the context of graph clustering, ``signal'' refers to the extent to which there are more edges within clusters than across clusters (in an appropriate quantitative sense).
In this paper, we pose a subtler meta-claim:
\begin{center}
\textit{Clustering is easier when there is more signal.}
\end{center}
Indeed, while previous results determined how much signal is necessary and sufficient for clustering to be computationally feasible, the above meta-claim suggests that the computational burden should decline gracefully with additional signal.
This makes intuitive sense considering it's easier to find a needle in a haystack when the haystack contains more needles.
Such behavior is particularly welcome in the context of semidefinite programming, as solvers are notoriously slow for large datasets despite having polynomial runtime.
The goal of this paper is to demonstrate this meta-claim in the special case of minimum bisection by semidefinite programming under the stochastic block model.
In other words, we provide a method to systematically decrease runtime for instances with more signal.

We start by formally defining the \textbf{stochastic block model}.
Let $G\sim\mathsf{SBM}(n_1,n_2,p,q)$ denote a random graph with vertex set $V(G)=S_1\sqcup S_2$ such that $|S_1|=n_1$ and $|S_2|=n_2$ and whose edges are independent Bernoulli random variables.
For every pair of vertices, if they reside in the same community $S_i$, we draw an edge between them with probability $p$, and otherwise the edge probability is $q$.
If $p>q$, then this models how a social network exhibits more connections within a community than between communities.
In our problem, we do not have access to the partition $\{S_1,S_2\}$.
One might randomize the communities to model this lack of information, but we do not bother with this formality here.

In the special case where $n_1=n_2$, one may exactly recover $\{S_1,S_2\}$ from $G$ provided $p$ is appropriately large compared to $q$.
To do so, we follow~\cite{AbbeBH:15,Bandeira:18} by encoding $G$ with the matrix $B$ defined by $B_{ij}=1$ if $i\leftrightarrow j$, $B_{ii}=0$, and otherwise $B_{ij}=-1$, and we then solve the program
\begin{equation}
\label{eq.original program}
\text{maximize}
\quad
x^\top Bx
\quad
\text{subject to}
\quad
1^\top x=0,
\quad
x\in\{\pm1\}^n.
\end{equation}
This corresponds to finding the minimum bisection of $G$, and one may show that the optimizers of this combinatorial program take the form $\pm(1_{\hat{S}_1}-1_{\hat{S}_2})$, where $\{\hat{S}_1,\hat{S}_2\}$ is the maximum likelihood estimator of $\{S_1,S_2\}$.
In pursuit of a computationally efficient alternative, one is inclined to consider a semidefinite program obtained by lifting $X:=xx^\top$ and relaxing:
\begin{equation}
\label{eq.afonso's sdp}
\text{maximize}
\quad
\operatorname{tr}(BX)
\quad
\text{subject to}
\quad
\operatorname{diag}(X)=1,
\quad
X\succeq 0.
\end{equation}
Impressively, this relaxation exactly recovers $\{S_1,S_2\}$ in the regime in which it is information theoretically feasible to do so:

\begin{proposition}[\cite{AbbeBH:15,Bandeira:18}]
\label{prop.balanced sbm}
Select $\alpha>\beta>0$ and for each $n\in2\mathbb{N}$, draw $G\sim\mathsf{SBM}(n/2,n/2,p,q)$ with $p=(\alpha \log n)/n$ and $q=(\beta \log n)/n$ and with planted communities $\{S_1,S_2\}$.
\begin{itemize}
\item[(a)]
If $\sqrt{\alpha}-\sqrt{\beta}>\sqrt{2}$, then \eqref{eq.afonso's sdp} recovers $\{S_1,S_2\}$ from $G$ with probability $1-o(1)$.
\item[(b)]
If $\sqrt{\alpha}-\sqrt{\beta}<\sqrt{2}$, then it is impossible to recover $\{S_1,S_2\}$ from $G$.
\end{itemize}
\end{proposition}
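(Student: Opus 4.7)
The plan is to prove (a) via a dual certificate construction and (b) via a second-moment swap argument.

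For part (a), fix the ground-truth labeling $\sigma := 1_{S_1} - 1_{S_2}$ and aim to show that $X^\star := \sigma\sigma^\top$ is the unique optimum of \eqref{eq.afonso's sdp} with high probability. Writing out the KKT conditions, optimality is equivalent to the existence of a diagonal matrix $D = \operatorname{diag}(y)$ with $Z := D - B \succeq 0$ and $Z\sigma = 0$; the second equation forces $y_i = \sigma_i (B\sigma)_i$, which up to an additive constant equals twice the ``in-minus-out'' degree difference at vertex $i$. Uniqueness of $X^\star$ then reduces to showing that $\sigma$ spans the entire kernel of $Z$. For any $v \perp \sigma$ I would bound
\[
v^\top Z v \;\geq\; \bigl(\min_i y_i\bigr)\|v\|^2 \;-\; v^\top \mathbb{E} B\,v \;-\; \|B - \mathbb{E} B\|\,\|v\|^2,
\]
and then control each piece: (i) the rank-two structure of $\mathbb{E} B = (p-q)\sigma\sigma^\top + (p+q-1)\mathbf{1}\mathbf{1}^\top - (2p-1)I$ makes $v^\top \mathbb{E} B v = O(\|v\|^2)$ for $v \perp \sigma$; (ii) standard spectral concentration for inhomogeneous Bernoulli matrices yields $\|B - \mathbb{E} B\| = O(\sqrt{\log n})$; (iii) the dominant term $\min_i y_i$ is bounded below by $\Theta(\log n)$ via a sharp Bennett/Poisson tail estimate on the degree differences combined with a union bound over the $n$ vertices, which succeeds precisely when $\sqrt{\alpha} - \sqrt{\beta} > \sqrt{2}$.

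For part (b), the strategy is to show that even the maximum likelihood estimator fails, which implies that no algorithm can succeed. For each pair $(u,v) \in S_1 \times S_2$, swapping $u$ and $v$ changes the number of intra-community edges by $-(A_u + A_v)$, where $A_u$ is the in-cluster minus out-of-cluster degree of $u$ (and similarly for $v$). Thus the ground-truth partition fails to be the unique MLE as soon as some swap has $A_u + A_v \leq 0$. Since $A_u + A_v$ is essentially a difference of independent $\operatorname{Poisson}$ variables with parameters $\alpha\log n$ and $\beta\log n$, its probability of being nonpositive is $n^{-(\sqrt{\alpha} - \sqrt{\beta})^2 + o(1)}$, so the expected number of bad pairs is of order $n^{2 - (\sqrt{\alpha} - \sqrt{\beta})^2 + o(1)}$, which diverges exactly when $\sqrt{\alpha} - \sqrt{\beta} < \sqrt{2}$. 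A second-moment computation --- exploiting the fact that distinct pairs share only $O(1)$ edges, so the swap events are nearly independent --- then upgrades expected existence into existence with probability $1 - o(1)$.

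The main obstacle in both parts is \emph{sharpness} of the threshold. Routine Chernoff--Hoeffding bounds lose constants and must be sharpened via Bennett's inequality and a careful Poisson approximation to produce the rate function $(\sqrt{\alpha} - \sqrt{\beta})^2/2$ that yields the critical exponent $2$. In part (a) this sharp tail estimate drives the union bound on $\min_i y_i$, and in part (b) it fixes the polynomial exponent in the bad-pair count. Everything else --- the KKT derivation, the spectral norm concentration, and the second-moment variance calculation --- is comparatively routine.
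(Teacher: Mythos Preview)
The paper does not actually prove this proposition; it is quoted as a known result from \cite{AbbeBH:15,Bandeira:18}, so there is no in-paper argument to compare against. Your outline for part~(a) is the standard dual-certificate argument from those references and is essentially correct. It also parallels the machinery the present paper develops in Lemma~\ref{lem.dual cert} and Lemma~\ref{lem.success probability} for the unbalanced case, although there the authors bound $\lambda_2(Z)$ via $\lambda_2(\mathbb{E}Z)-\|Z-\mathbb{E}Z\|_{2\to 2}$ and matrix Bernstein rather than isolating $\min_i y_i$; that cruder decomposition is why the paper's own bounds do not reach the sharp threshold.

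Your plan for part~(b), however, has a gap. The claim that ``distinct pairs share only $O(1)$ edges'' is false when the two pairs share a vertex: if $(u,v)$ and $(u,v')$ both involve $u$, then both swap events depend on the \emph{entire} neighbourhood of $u$, not on a bounded number of edges. Bounding the joint probability for such overlapping pairs crudely by $\mathbb{P}(A_u+A_v\le 0)\approx n^{-(\sqrt\alpha-\sqrt\beta)^2}$, the $\Theta(n^3)$ overlapping pairs contribute on the order of $n^{3-(\sqrt\alpha-\sqrt\beta)^2}$ to $\mathbb{E}N^2$, which dominates $(\mathbb{E}N)^2\asymp n^{4-2(\sqrt\alpha-\sqrt\beta)^2}$ throughout the range $1\le \sqrt\alpha-\sqrt\beta<\sqrt 2$. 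So the pair-based second-moment argument, as written, fails on half of the impossibility regime. The standard fix (and what the cited references do) is to work with single-vertex events: show via a second moment on the $n/2$ indicators $\{A_u\le 0\}_{u\in S_1}$---where distinct indicators genuinely share only the single edge $\{u,u'\}$---that some $u\in S_1$ has $A_u\le 0$, and likewise some $v\in S_2$, then swap that specific pair. The single-vertex bad probability is $n^{-(\sqrt\alpha-\sqrt\beta)^2/2+o(1)}$, so the expected count $n^{1-(\sqrt\alpha-\sqrt\beta)^2/2}$ diverges exactly when $\sqrt\alpha-\sqrt\beta<\sqrt 2$.
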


This phase transition partitions the set of all $(\alpha,\beta)$ into two regimes: one in which $\{S_1,S_2\}$ can be recovered from $G$ in polynomial time by semidefinite programming, and another in which no algorithm exists (not even an inefficient one) that recovers $\{S_1,S_2\}$ from $G$.
Recalling our meta-claim, we would like to show that in the former case, a larger choice of $\alpha$ for a fixed $\beta$ makes it more computationally efficient to recover $\{S_1,S_2\}$ from $G$.
In this spirit, we first consider Figure~\ref{fig.phase_transition_times}.
This figure illustrates that the SDP solver empirically behaves according to our meta-claim, taking less time to identify the clustering when there is more signal available.
In this paper, we do not explain this specific phenomenon.
Instead, we analyze a modification to the SDP algorithm using an idea that might be transferred more easily to other settings.
In fact, Figure~\ref{fig.phase_transition_times} also illustrates that our modification provides a substantial speedup over the original SDP.

\begin{figure}
\begin{center}
\includegraphics[width=\textwidth]{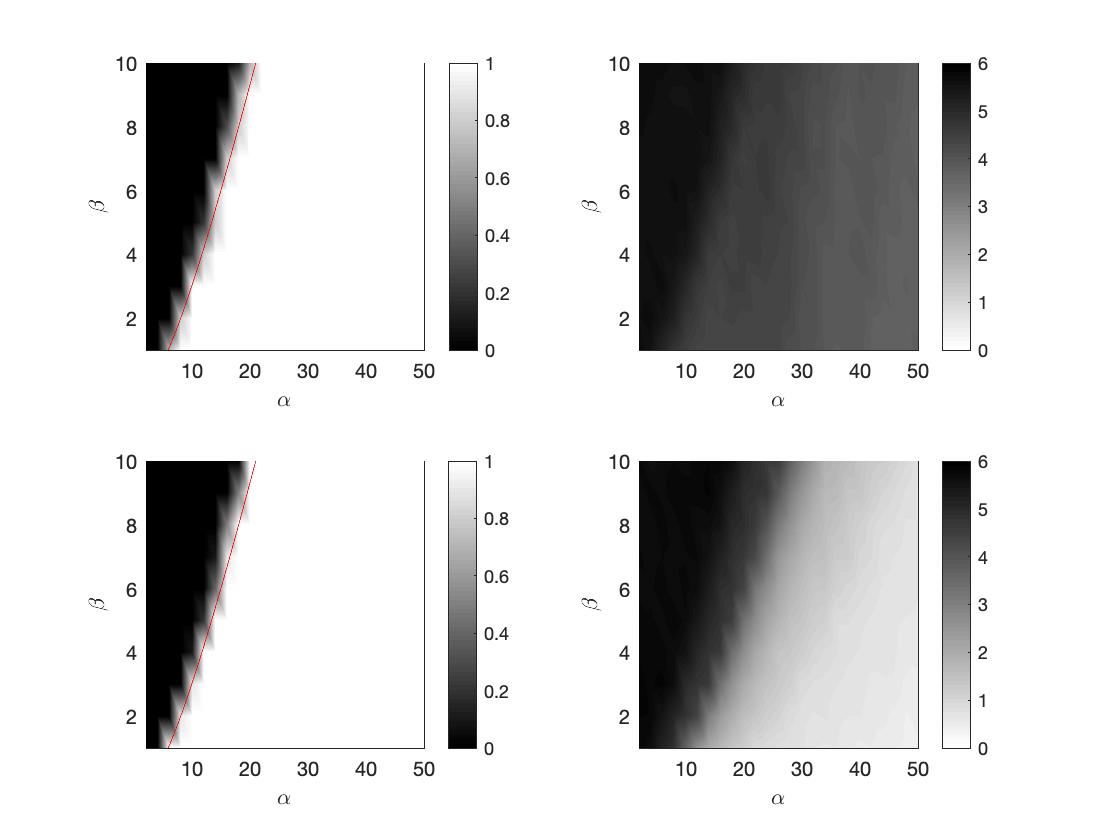}
\end{center}
\caption{\label{fig.phase_transition_times}
For each $\alpha\in\{2,4,\ldots,50\}$ and $\beta\in\{1,2,\ldots,10\}$, perform the following experiment $10$ times and plot the results.
Draw a random graph with distribution $\mathsf{SBM}(n_1=150,n_2=150,p=(\alpha\log n)/n,q=(\beta\log n)/n)$.
Attempt to exactly recover the planted communities by solving the Abbe--Bandeira--Hall SDP \eqref{eq.afonso's sdp} in CVX~\cite{GrantB:cvx}.
The proportion of recovery is displayed in \textbf{(top left)}, while the average runtime (in seconds) is displayed in \textbf{(top right)}.
Next, attempt to exactly recover the planted communities using the sketch-and-solve method described in this paper; see Section~5 for details.
The recovery rate and average runtime are displayed in \textbf{(bottom left)} and \textbf{(bottom right)}, respectively.
The red curve depicts the phase transition from Proposition~\ref{prop.balanced sbm}.
For both approaches, the runtime is smaller when $\alpha\gg\beta$, and more dramatically so for the sketch-and-solve method.
}
\end{figure}

Recall the general \textbf{sketch-and-solve approach}:
Given a large problem, we
\begin{itemize}
\item[1.]
randomly project to a smaller version of the same problem,
\item[2.]
solve the smaller version of the problem, and then
\item[3.]
use the solution to the smaller problem to (approximately) solve the original problem.
\end{itemize}
Notice that step~2 promises to be faster since the size of the problem is smaller.
This approach has been particularly effective in approximately solving large least squares problems~\cite{Woodruff:14}, and some work has been done to transfer these ideas to the setting of semidefinite programs~\cite{YurtseverUTC:17,BluhmS:19}.

We will apply the sketch-and-solve approach to systematically decrease the computational burden of clustering given more signal.
In the next section, we show how to perform steps~1 and~3 above, thereby reducing our task to solving step~2.
Specifically, we sketch by passing to the subgraph induced by a random subset of vertices, but in doing so, we no longer have communities of equal size.
This motivates the study of the unbalanced stochastic block model, and in Section~3, we show a slight modification of \eqref{eq.afonso's sdp} exactly recovers the communities in this setting.
In Section~4, we combine the ideas and results in Sections~2 and~3 to state and prove our main result.
As we will see, we can afford to sketch down to a smaller subgraph when there is more signal, which corroborates our meta-claim.
We conclude in Section~5 with a discussion.

\section{Exact recovery from a sketch oracle}

As discussed in the previous section, our approach is to sketch the original graph to a smaller graph, solve the clustering problem for the smaller graph, and then use these small clusters to determine a clustering for the entire graph.
For the last step, we will assign each vertex to the small cluster that it shares more edges with.
The following lemma indicates that this sketch-and-solve approach identifies the planted clusters provided (1) the small graph we sketch to is not too small relative to the signal (measured in terms of the sketching parameter $\gamma$ relative to the edge density parameters $\alpha$ and $\beta$) and (2) we correctly identify the planted small clusters $R_1$ and $R_2$.
We discuss how to accomplish (2) in the next section.

\begin{lemma}
\label{lem.sketching lemma}
Draw $G\sim\mathsf{SBM}(n/2,n/2,p,q)$ with planted communities $\{S_1,S_2\}$ and with $p=(\alpha\log n)/n$ and $q=(\beta\log n)/n$, where $\alpha>\beta>0$.
Draw vertices $V$ at random according to a Bernoulli process with rate $\gamma$ and put $R_i:=S_i\cap V$ for both $i\in\{1,2\}$.
Let $e(v,S)$ denote the number of edges in $G$ between $v$ and $S\subseteq V(G)$ and take
\[
\hat{S}_i
:=R_i\cup \{v\in V(G)\setminus V: e(v,R_i)>e(v,R_{3-i})\}.
\]
Then $(\hat{S}_1,\hat{S}_2)=(S_1,S_2)$ with probability $1-o(1)$ provided
\[
\gamma
>\frac{8}{3}\cdot\frac{2\alpha+\beta}{(\alpha-\beta)^2}.
\]
\end{lemma}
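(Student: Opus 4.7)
The plan is to reduce the lemma to a single concentration inequality per vertex and then apply a union bound. First, observe that the failure event $\{(\hat S_1,\hat S_2)\ne(S_1,S_2)\}$ is contained in the union, taken over $i\in\{1,2\}$ and $v\in S_i\setminus V$, of the events $\{e(v,R_i)\le e(v,R_{3-i})\}$. By symmetry it suffices to bound the probability of this event for a fixed $v\in S_1\setminus V$ and then take a union bound over at most $n$ such vertices.

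The next step is to condition on $V$, which freezes $R_1$ and $R_2$. A standard Chernoff bound shows that with probability $1-o(1)$ one has $|R_i|=(1+o(1))\gamma n/2$ for both $i\in\{1,2\}$; call this good event $\mathcal E$. Conditional on $V$, and in particular on $\mathcal E$, the edges incident to any $v\notin V$ remain fresh independent Bernoullis, so $e(v,R_1)\sim\operatorname{Bin}(|R_1|,p)$ and $e(v,R_2)\sim\operatorname{Bin}(|R_2|,q)$ are independent; moreover, the misclassification events for distinct $v\notin V$ are mutually independent given $V$.

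The core calculation is Bernstein's inequality applied to $Z:=e(v,R_1)-e(v,R_2)$, a sum of $|R_1|+|R_2|$ independent Bernoulli-like summands with mean $|R_1|p-|R_2|q$, variance at most $|R_1|p+|R_2|q$, and absolute increments at most $1$. On $\mathcal E$ this yields
\[
\mathbb P(Z\le 0\mid V)\le\exp\!\left(-\frac{(|R_1|p-|R_2|q)^2/2}{|R_1|p+|R_2|q+(|R_1|p-|R_2|q)/3}\right)=n^{-(1-o(1))\cdot 3\gamma(\alpha-\beta)^2/[8(2\alpha+\beta)]},
\]
after substituting $|R_i|=(1+o(1))\gamma n/2$, $p=\alpha\log n/n$, and $q=\beta\log n/n$ and simplifying the denominator. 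Under the hypothesis $\gamma>(8/3)(2\alpha+\beta)/(\alpha-\beta)^2$, the exponent strictly exceeds $1$, so summing over the at most $n$ vertices outside $V$ forces the total failure probability to be $o(1)$, which is what we want.

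The main obstacle is bookkeeping rather than any conceptual hurdle: pinning down the constant $8/3$ requires carefully combining the Bernstein variance term $|R_1|p+|R_2|q$ (which contributes $2(p+q)$) with the range correction $(|R_1|p-|R_2|q)/3$ (which contributes $(p-q)/3$), whose sum collapses to $(4/3)(2p+q)$. Everything else in the argument — the concentration of $|R_i|$, the independence of the relevant edges given $V$, and the final union bound — is standard.
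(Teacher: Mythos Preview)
Your proposal is correct and follows essentially the same route as the paper: union-bound over vertices outside $V$, condition on $V$, apply Bernstein's inequality to $e(v,R_i)-e(v,R_{3-i})$, and verify that the resulting exponent exceeds $1$ under the stated hypothesis on $\gamma$. The only cosmetic difference is that the paper handles the randomness in $|R_1|,|R_2|$ by averaging the Bernstein bound against the binomial law of $|V|$ (via its moment generating function) together with an explicit slack parameter $\epsilon$ controlling $|K_1-K_2|$, whereas you first restrict to the high-probability event $\{|R_i|=(1+o(1))\gamma n/2\}$; both devices yield the same threshold $\gamma>\tfrac{8}{3}\cdot\tfrac{2\alpha+\beta}{(\alpha-\beta)^2}$.
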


\begin{proof}
Let $\mathcal{E}$ denote the success event.
After conditioning on $V$, the union bound gives
\[
\mathbb{P}(\mathcal{E}^c|V)
\leq\sum_{i\in\{1,2\}}\sum_{v\in S_i}1_{\{v\in V(G)\setminus V\}}\cdot\mathbb{P}(\{e(v,R_i)\leq e(v,R_{3-i})\}|V).
\]
Conditioned on $V$, then for each $v\in S_i\cap (V(G)\setminus V)$, the quantity $e(v,R_i)-e(v,R_{3-i})$ takes the form
\[
\sum_{j=1}^{K_i}B_j^{(p)}-\sum_{j=1}^{K_{3-i}}B_j^{(q)},
\]
where $K_i:=|R_i|$ and the terms in the sums are independent Bernoulli random variables with rate indicated by the superscript.
The mean of this sum is $K_ip-K_{3-i}q$.
After centering, each term has absolute value at most $1$ almost surely, and the variance of the sum is
\[
K_ip(1-p)+K_{3-i}q(1-q)
\leq K_ip+K_{3-i}q.
\]
If in addition to $\{v\in S_i\cap (V(G)\setminus V)\}$ we restrict to the event $\mathcal{E}_i:=\{K_ip-K_{3-i}q\geq0\}$, then we may apply Bernstein's inequality for bounded variables (see Theorem 2.8.4 in~\cite{Vershynin:18}):
\begin{align*}
\mathbb{P}(\{e(v,R_i)-e(v,R_{3-i})\leq0\}|V)
&\leq \operatorname{exp}\bigg(-\frac{(K_ip-K_{3-i}q)^2/2}{K_ip+K_{3-i}q+(K_ip-K_{3-i}q)/3}\bigg)\\
&\leq \operatorname{exp}\bigg(-\frac{(K(p-q)-J(p+q))^2/2}{K(\frac{4}{3}p+\tfrac{2}{3}q)+J(\tfrac{4}{3}p-\frac{2}{3}q)}\bigg),
\end{align*}
where $K:=\frac{K_1+K_2}{2}$ and $J:=|\frac{K_1-K_2}{2}|$.
Denote $\mathcal{E}_3:=\{J\leq \epsilon K\}$ for some $\epsilon\in(0,\frac{\alpha-\beta}{\alpha+\beta})$ to be selected later.
Then on the event $\{v\in S_i\cap (V(G)\setminus V)\}\cap\mathcal{E}_i\cap\mathcal{E}_3$, it further holds that
\begin{align*}
\mathbb{P}(\{e(v,R_i)-e(v,R_{3-i})\leq0\}|V)
&\leq\operatorname{exp}\bigg(-\frac{(K(p-q)-J(p+q))^2/2}{K(\frac{4}{3}p+\tfrac{2}{3}q)+J(\tfrac{4}{3}p-\frac{2}{3}q)}\bigg)\\
&\leq\operatorname{exp}\bigg(-\frac{((p-q)-\epsilon(p+q))^2/2}{(\frac{4}{3}p+\tfrac{2}{3}q)+\epsilon(\tfrac{4}{3}p-\frac{2}{3}q)}\cdot K\bigg)
=:e_{p,q,\epsilon}(K).
\end{align*}
Overall, we may bound the failure probability:
\begin{align*}
\mathbb{P}(\mathcal{E}^c)
=\mathbb{E}[\mathbb{P}(\mathcal{E}^c|V)]
&\leq\mathbb{E}\bigg[\sum_{i\in\{1,2\}}\sum_{v\in S_i}1_{\{v\in V(G)\setminus V\}}\cdot\mathbb{P}(\{e(v,R_i)\leq e(v,R_{3-i})\}|V)\bigg]\\
&\leq\mathbb{E}\bigg[\sum_{i\in\{1,2\}}\sum_{v\in S_i}1_{\{v\in V(G)\setminus V\}}\cdot\Big(1_{(\mathcal{E}_i\cap\mathcal{E}_3)^c}+1_{\mathcal{E}_i\cap\mathcal{E}_3}\cdot e_{p,q,\epsilon}(K)\Big)\bigg]\\
&\leq\mathbb{E}\bigg[\sum_{i\in\{1,2\}}\sum_{v\in S_i}\Big(1_{(\mathcal{E}_i\cap\mathcal{E}_3)^c}+e_{p,q,\epsilon}(K)\Big)\bigg]\\
&=\frac{n}{2}\sum_{i\in\{1,2\}}\Big(\mathbb{P}((\mathcal{E}_i\cap\mathcal{E}_3)^c)+\mathbb{E}[e_{p,q,\epsilon}(K)]\Big),
\end{align*}
where the last inequality uses $e_{p,q,\epsilon}(K)\geq0$.
We will find $\epsilon\in(0,\frac{\alpha-\beta}{\alpha+\beta})$ such that
\[
\mathbb{P}(\mathcal{E}_1^c)
=e^{-\Omega(n)},
\qquad
\mathbb{P}(\mathcal{E}_2^c)
=e^{-\Omega(n)},
\qquad
\mathbb{P}(\mathcal{E}_3^c)
=e^{-\Omega(n)},
\qquad
\mathbb{E}[e_{p,q,\epsilon}(K)]
=o(1/n),
\]
from which it follows that $\mathbb{P}(\mathcal{E}^c)=o(1)$, as desired.

For the first three estimates, it will be helpful to first consider independent Bernoulli variables $X_1,\ldots,X_{n/2}$ and $Y_1\ldots,Y_{n/2}$, all with rate $\gamma$.
Given $a>b>0$, we bound
\[
f(a,b)
:=\mathbb{P}\bigg\{\sum_{j=1}^{n/2}aX_j-\sum_{j=1}^{n/2}bY_j<0\bigg\}.
\]
The expectation of the sum is $\frac{n}{2}(a-b)\gamma$, and after centering, each term in the sum has absolute value at most $a$ almost surely.
The variance of the sum is $\frac{n}{2}(a^2+b^2)\gamma(1-\gamma)$.
As such, Bernstein's inequality for bounded variables gives
\[
f(a,b)
\leq\operatorname{exp}\bigg(-\frac{(\frac{n}{2}(a-b)\gamma)^2/2}{\frac{n}{2}(a^2+b^2)\gamma(1-\gamma)+a\cdot \frac{n}{2}(a-b)\gamma/3}\bigg).
\]
Notice that $\mathbb{P}(\mathcal{E}_1^c)=\mathbb{P}(\mathcal{E}_2^c)=f(p,q)=f(\frac{\alpha\log n}{n},\frac{\beta\log n}{n})$.
Simplifying then gives
\[
\mathbb{P}(\mathcal{E}_i^c)
\leq\operatorname{exp}\bigg(-\frac{(\frac{1}{2}(\alpha-\beta)\gamma)^2/2}{\frac{1}{2}(\alpha^2+\beta^2)\gamma(1-\gamma)+\alpha\cdot \frac{1}{2}(\alpha-\beta)\gamma/3}\cdot n\bigg),
\]
which is $e^{-\Omega(n)}$ since $\alpha>\beta$.
Next,
\begin{align*}
\mathbb{P}(\mathcal{E}_3^c)
&=\mathbb{P}\{J>\epsilon K\}\\
&=\mathbb{P}\{|K_1-K_2|>\epsilon(K_1+K_2)\}\\
&\leq\mathbb{P}\{K_1-K_2>\epsilon(K_1+K_2)\}+\mathbb{P}\{K_2-K_1>\epsilon(K_1+K_2)\}\\
&=\mathbb{P}\{(1+\epsilon)K_2-(1-\epsilon)K_1<0\}+\mathbb{P}\{(1+\epsilon)K_1-(1-\epsilon)K_2<0\}\\
&=2\cdot f(1+\epsilon,1-\epsilon),
\end{align*}
which is $e^{-\Omega(n)}$ since $\epsilon\in(0,1)$.

It remains to show that $\mathbb{E}[e_{p,q,\epsilon}(K)]=o(1/n)$.
Since $K=|V|/2$, we may write
\[
e_{p,q,\epsilon}(K)
=\operatorname{exp}\bigg(-\frac{((p-q)-\epsilon(p+q))^2/4}{(\frac{4}{3}p+\tfrac{2}{3}q)+\epsilon(\tfrac{4}{3}p-\frac{2}{3}q)}\cdot |V|\bigg).
\]
We may interpret $\mathbb{E}[e_{p,q,\epsilon}(K)]$ as the moment generating function of $|V|$ evaluated at a point.
Since $|V|$ has binomial distribution over $n$ trials with rate $\gamma$, this gives
\begin{align*}
\mathbb{E}[e_{p,q,\epsilon}(K)]
&=\bigg[1-\gamma+\gamma\cdot\operatorname{exp}\bigg(-\frac{((p-q)-\epsilon(p+q))^2/4}{(\frac{4}{3}p+\tfrac{2}{3}q)+\epsilon(\tfrac{4}{3}p-\frac{2}{3}q)}\bigg)\bigg]^n
=:((1-\gamma)+\gamma e^{-c(\log n)/n})^n,
\end{align*}
where $c:=\frac{((\alpha-\beta)-\epsilon(\alpha+\beta))^2/4}{(\frac{4}{3}\alpha+\frac{2}{3}\beta)+\epsilon(\frac{4}{3}\alpha-\frac{2}{3}\beta)}$.
Next, we compare to linear approximations to obtain
\begin{align*}
((1-\gamma)+\gamma e^{-c(\log n)/n})^n
&=\operatorname{exp}(n\log((1-\gamma)+\gamma e^{-c(\log n)/n}))\\
&\leq\operatorname{exp}(n((1-\gamma)+\gamma e^{-c(\log n)/n}-1))\\
&=\operatorname{exp}(-\gamma n(1-e^{-c(\log n)/n}))
=\operatorname{exp}(-\gamma n\cdot(1-o(1))\cdot\tfrac{c\log n}{n}).
\end{align*}
Finally, by our assumption on $\gamma$, it holds that $\gamma>1/c$ for every sufficiently small $\epsilon>0$.
Taking any such $\epsilon$ gives that the above quantity is $o(1/n)$, as desired.
\end{proof}

\section{The unbalanced stochastic block model}

Lemma~\ref{lem.sketching lemma} reduces our problem of exact recovery of balanced communities in the stochastic block model to a smaller problem.
Unfortunately, by sketching to a random subgraph, the planted communities are no longer balanced, and so we cannot naively apply Proposition~\ref{prop.balanced sbm}.
Still, considering the overwhelming success of \eqref{eq.afonso's sdp} in the balanced case, one is inclined to try some version of it in the more general case.

To this end, we first note that the choice of $B=2A-J+I$ in \eqref{eq.original program} as an encoding of $G$ is somewhat arbitrary; here, $A$ denotes the adjacency matrix of $G$, and $J$ denotes the all-ones matrix.
Suppose we replaced $B$ with any $\tilde{B}=aA+b J+c I$ with $a>0$. 
Then the constraints $1^\top x=0$ and $x\in\{\pm1\}^n$ together ensure that
\[
x^\top \tilde{B}x
=x^\top(aA+b J+c I)x
=a\cdot x^\top Ax+c\cdot n.
\]
Observe that the maximizer in \eqref{eq.original program} is the same for all such objectives. 
However, we can expect different choices of $\tilde{B}$ to produce different optimizers once we relax to the SDP.
Of course, there is no change to the SDP if we change $c$ since $X$ is constrained to have all-ones diagonal, and so we take $c=0$ for simplicity.
Next, we can rescale $\tilde{B}$ so that $a=1$ without changing the SDP.
Overall, we are interested in encodings of the form $\tilde{B}=A-\mu J$.
The choice $\mu=\frac{1}{2}$ corresponds to the Abbe--Bandeira--Hall encoding~\cite{AbbeBH:15}.
Alternatively, we could run the Goemans--Williamson relaxation of maximum cut on the complement of $G$, which corresponds to taking $\mu=1$~\cite{GoemansW:95}.
These two SDPs behave similarly for $G\sim\mathsf{SBM}(n_1,n_2,p,q)$ when $n_1=n_2$, but as Figure~\ref{fig.phase_transitions} illustrates, the Abbe--Bandeira--Hall SDP performs better in the unbalanced case.
Curiously, it appears that the choice $\mu=(p+q)/2$ exhibits the phase transition from Proposition~\ref{prop.balanced sbm}.

\begin{figure}
\begin{center}
\includegraphics[width=\textwidth,trim={200 0 200 0},clip]{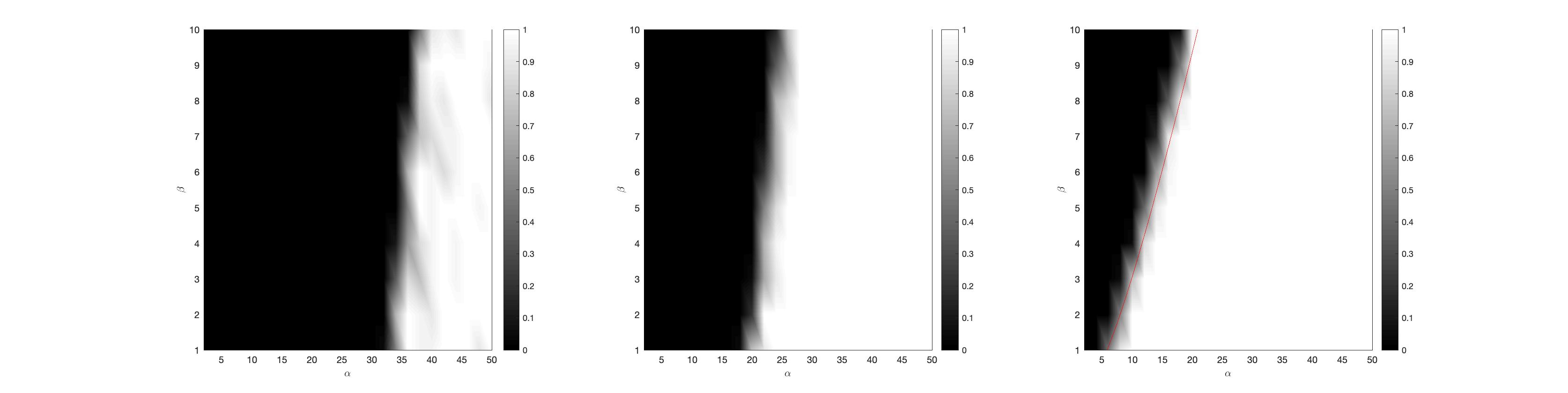}
\end{center}
\caption{\label{fig.phase_transitions}
Succes rates for exact recovery under $\mathsf{SBM}(n_1=100,n_2=200,p=(\alpha\log n)/n,q=(\beta\log n)/n)$.
\textbf{(left)}
Goemans--Williamson SDP.
\textbf{(center)}
Abbe--Bandeira--Hall SDP.
\textbf{(right)}
Proposed SDP, which assumes access to the value of $(p+q)/2$.
The red curve depicts the curve $\sqrt{\alpha}-\sqrt{\beta}=\sqrt{2}$, which by Proposition~\ref{prop.balanced sbm} is the phase transition for exact recovery in the balanced case.
It appears that the same phase transition holds for the proposed SDP in the unbalanced case.
}
\end{figure}

In this paper, we consider the family semidefinite programs
\begin{equation}
\label{eq.new sdp}
\tag{$(A,\mu)$-SDP}
\text{maximize}
\quad
\operatorname{tr}((A-\mu J)X)
\quad
\text{subject to}
\quad
\operatorname{diag}(X)=1,
\quad
X\succeq 0.
\end{equation}
Similar to~\cite{AbbeBH:15}, we pass to the dual program to obtain an optimality condition:

\begin{lemma}
\label{lem.dual cert}
Let $A$ denote the adjacency matrix of a simple graph $G$ on $n$ vertices.
Partition partition the vertices $S_1\sqcup S_2=V(G)$ and put $n_1:=|S_1|$, $n_2:=|S_2|$, and $g:=1_{S_1}-1_{S_2}$.
Let $G^+$ denote the subgraph of $G$ with edge set $E(S_1,S_1)\cup E(S_2,S_2)$, and let $G^-$ denote the subgraph with edge set $E(S_1,S_2)$.
Finally, let $D^+$ and $D^-$ denote the diagonal matrices of vertex degrees in $G^+$ and $G^-$.
If the matrix
\[
D^+-D^--\mu(n_1-n_2)\operatorname{diag}(g)-A+\mu J
\]
is positive semidefinite with rank $n-1$, then $gg^\top$ is the unique solution to $(A,\mu)$-SDP.
\end{lemma}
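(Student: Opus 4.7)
The plan is to exhibit a dual certificate for the $(A,\mu)$-SDP and then convert the rank hypothesis into uniqueness. Writing $C:=A-\mu J$, the Lagrangian dual of the primal
\[
\text{maximize}\quad\operatorname{tr}(CX)\quad\text{subject to}\quad\operatorname{diag}(X)=1,\quad X\succeq0
\]
is $\min 1^\top y$ subject to $Z:=\operatorname{diag}(y)-C\succeq0$. So the natural first step is to read off the $y$ that makes $Z$ agree with the matrix in the statement: matching diagonals forces the choice $y_i:=(D^+-D^-)_{ii}-\mu(n_1-n_2)g_i$, and the off-diagonals then automatically agree because $-A+\mu J$ is the off-diagonal part of $-C$. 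With this identification, the hypothesis is precisely that the dual slack $Z$ is PSD of rank $n-1$.

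The key algebraic step is to verify the complementary slackness condition $Zg=0$. Here I would split into three contributions. First, $(D^+-D^-)g$ can be computed entrywise: for $i\in S_1$ it equals $d_i^+-d_i^-$ and for $i\in S_2$ it equals $-(d_i^+-d_i^-)$, which matches $(Ag)_i=\sum_{j\sim i}g_j$ in both cases, giving $(D^+-D^-)g=Ag$. Second, $\operatorname{diag}(g)g=g\odot g=1$ since $g_i\in\{\pm1\}$, and third, $Jg=(1^\top g)1=(n_1-n_2)1$; hence the two $\mu$ terms $-\mu(n_1-n_2)\operatorname{diag}(g)g$ and $\mu Jg$ cancel. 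These observations together give $Zg=0$, and in particular $g^\top Zg=0$.

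Optimality of $gg^\top$ then follows from weak duality: for any primal feasible $X$,
\[
1^\top y-\operatorname{tr}(CX)=\operatorname{tr}(\operatorname{diag}(y)X)-\operatorname{tr}(CX)=\operatorname{tr}(ZX)\geq0,
\]
with equality at $X=gg^\top$ because $\operatorname{tr}(Zgg^\top)=g^\top Zg=0$. Finally, for uniqueness, suppose $X^\star$ is any optimizer. The above chain forces $\operatorname{tr}(ZX^\star)=0$, and since both $Z$ and $X^\star$ are PSD this implies $ZX^\star=0$, so the range of $X^\star$ lies in $\ker Z$. The rank $n-1$ hypothesis on $Z$ makes $\ker Z$ one-dimensional, and since $g\in\ker Z$ we conclude $\ker Z=\operatorname{span}(g)$ and therefore $X^\star=cgg^\top$ for some $c\geq0$. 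The constraint $\operatorname{diag}(X^\star)=1$ together with $g_i^2=1$ pins down $c=1$, giving $X^\star=gg^\top$.

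I expect no substantive obstacle here; the whole argument is bookkeeping around the dual. The only place where one has to be slightly careful is the case analysis verifying $(D^+-D^-)g=Ag$, where the minus sign flip between $S_1$ and $S_2$ has to line up correctly, and the cancellation of the two $\mu$-terms, which is the conceptual reason the encoding $A-\mu J$ yields a valid dual certificate at all.
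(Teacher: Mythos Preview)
Your proof is correct and follows essentially the same dual-certificate route as the paper: identify the diagonal dual variable so that the slack $Z$ coincides with the given matrix, establish complementary slackness at $gg^\top$, and use the rank-$(n-1)$ hypothesis to force uniqueness. The only cosmetic differences are that you verify $Zg=0$ directly (the paper first computes $g^\top Zg=0$ via a trace identity and then uses PSDness to get $Zg=0$) and that you invoke the standard fact $\operatorname{tr}(ZX)=0\Rightarrow ZX=0$ for PSD matrices whereas the paper spells this out through spectral decompositions.
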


\begin{proof}
The dual of $(A,\mu)$-SDP is given by
\[
\text{minimize}
\quad
\operatorname{tr}(Y)
\quad
\text{subject to}
\quad
Y\succeq A-\mu J,
\quad
Y\text{ diagonal}.
\]
To verify weak duality, we have
\[
\operatorname{tr}((A-\mu J)X)
\leq\operatorname{tr}(YX)
=\operatorname{tr}(Y),
\]
where the first inequality follows from the semidefinite constraints, and the equality follows from the diagonal constraints.
By the hypotheses of the lemma, we see that
\[
X_0:=gg^\top,
\qquad
Y_0:=D^+-D^--\mu(n_1-n_2)\operatorname{diag}(g)
\]
are primal- and dual-feasible, respectively.
As such, by rearranging the above inequality, it suffices to show that equality in
\[
\operatorname{tr}((Y_0-A+\mu J)X)
\geq 0
\]
holds with primal-feasible $X$ if and only if $X=X_0$.
For the ``if'' direction, note that
\[
\operatorname{tr}(Y_0)
=\operatorname{tr}(D^+)-\operatorname{tr}(D^-)-\mu(n_1-n_2)\operatorname{tr}(\operatorname{diag}(g))
=g^\top Ag-\mu(n_1-n_2)^2,
\]
and so
\[
\operatorname{tr}((Y_0-A+\mu J)X_0)
=g^\top(Y_0-A+\mu J)g
=\operatorname{tr}(Y_0)-g^\top Ag+\mu(g^\top 1)^2
=0.
\]
For the ``only if'' direction, put $Z:=Y_0-A+\mu J$.
Then by assumption, $Z$ is positive semidefinite, and we just showed that $g^\top Zg=0$.
It follows that $Zg=0$.
Furthermore, $Z$ has rank $n-1$ by assumption, and so the nullspace of $Z$ equals the span of $g$.
Suppose $X$ is primal-feasible with $\operatorname{tr}(ZX)=0$, and consider the spectral decompositions $X=\sum_i \lambda_ix_ix_i^\top$ and $Z=\sum_j\mu_jz_jz_j^\top$.
Note that $0=\mu_1<\mu_2\leq\mu_n$ and $z_1=n^{-1/2}\cdot g$.
Then
\[
0
=\operatorname{tr}(ZX)
=\sum_i \lambda_i\sum_{j>1}\mu_j|\langle x_i,z_j\rangle|^2.
\]
Since each term in the outer sum is nonnegative, it must hold that for each $i$, either $\lambda_i=0$ or $|\langle x_i,z_j\rangle|^2=0$ for every $j>1$.
Considering $\{z_j\}_{j\in[n]}$ is an orthonormal basis for $\mathbb{R}^n$, we see that $|\langle x_i,z_j\rangle|^2=0$ for every $j>1$ only if $x_i$ is a scalar multiple of $z_1$.
On the other hand, $\{x_i\}_{i\in[n]}$ is also an orthonormal basis, and so there is at most one such $x_i$.
It follows that $X$ has rank at most $1$, and in particular, $X$ is a scalar multiple of $gg^\top$.
The diagonal constraint on the primal-feasible $X$ then implies that $X=gg^\top=X_0$, as desired.
\end{proof}

Next, we apply Lemma~\ref{lem.dual cert} to the stochastic block model to show when the SDP exactly recovers the planted clusters.
Notice the appearance of $\frac{p+q}{2}$ as a threshold on $\mu$:

\begin{lemma}
\label{lem.success probability}
Take $n_1,n_2\geq1$ and $p>q>0$, put $n:=n_1+n_2$ and $m:=|n_1-n_2|$, draw $G\sim\mathsf{SBM}(n_1,n_2,p,q)$ with planted communities $\{S_1,S_2\}$, let $A$ denote the adjacency matrix of $G$, and put $g=1_{S_1}-1_{S_2}$.
If $\mu>q$, then $gg^\top$ is the unique solution to $(A,\mu)$-SDP with probability at least
\[
\left\{\begin{array}{ll}
1-2n\operatorname{exp}(-\frac{3}{2}\cdot\frac{((\mu-q)n)^2}{(3p+q+2\mu)n+3(p-q)m})
&\text{if }\mu<\frac{p+q}{2}\vspace{8pt}\\
1-2n\operatorname{exp}(-\frac{3}{16}\cdot\frac{((p-q)n-(2\mu-(p+q))m)^2}{(2p+q)n+(2p-q-\mu)m})
&\text{if }\mu\geq\frac{p+q}{2}.
\end{array}
\right.
\]
\end{lemma}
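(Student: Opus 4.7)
The plan is to invoke Lemma~\ref{lem.dual cert} by checking that the candidate dual certificate
\[
M := D^+ - D^- - \mu(n_1 - n_2)\operatorname{diag}(g) - A + \mu J
\]
is positive semidefinite with nullspace exactly $\operatorname{span}(g)$ with the claimed probability. The computation inside the proof of Lemma~\ref{lem.dual cert} already gives $g^\top M g = 0$ deterministically, so whenever $M\succeq 0$ the vector $g$ lies in the kernel automatically. It thus suffices to prove $x^\top M x > 0$ for every unit $x \perp g$.

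The first step I would take is to rewrite the quadratic form via the identity
\[
x^\top(D^+ - D^- - A)x = \sum_{\{i,j\}\in E^+}(x_i - x_j)^2 - \sum_{\{i,j\}\in E^-}(x_i + x_j)^2,
\]
so that $x^\top Mx$ becomes this edge-weighted sum plus the deterministic correction $\mu(1^\top x)^2 - \mu(n_1-n_2)\sum_i g_i x_i^2$. Taking expectations through $\mathbb{E}[A] = \tfrac{p+q}{2}J + \tfrac{p-q}{2}gg^\top - pI$ then makes the phase boundary $\mu = (p+q)/2$ explicit: when $\mu < (p+q)/2$ the dominant term of $\mathbb{E}[x^\top Mx]$ scales with $\mu - q$, matching the first bound, whereas for $\mu \geq (p+q)/2$ the dominant term is the pure-signal contribution $(p-q)n - (2\mu - (p+q))m$ coming from the $gg^\top$ part of $\mathbb{E}[A]$, matching the second.

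To pass from this expectation to a high-probability inequality, for each vertex $v\in V(G)$ I would introduce a scalar margin statistic analogous to $e(v,R_i) - e(v,R_{3-i})$ from the proof of Lemma~\ref{lem.sketching lemma}, but shifted so that the $\mu J$-correction is absorbed into its mean. Each such statistic is a centered sum of at most $n-1$ independent Bernoulli increments with summand magnitudes bounded by $1$ and with variance bounded above by the denominator in the stated tail bound (carrying the correct sign on the $m$-correction in each regime). Bernstein's inequality (Theorem~2.8.4 of~\cite{Vershynin:18}) then produces the per-vertex exponential estimate, and a union bound over $2n$ events --- one per vertex in each of the two relevant sign directions --- gives the $2n$ prefactor. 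On the intersection of these $2n$ good events, the vertex-level estimates combine into a uniform strict lower bound on $x^\top Mx$ over unit $x \perp g$, completing the dual-certificate verification.

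The main technical obstacle is the case split: one must pick the right scalar statistic in each of the $\mu < (p+q)/2$ and $\mu \geq (p+q)/2$ regimes so that the mean recovers the signal numerator and the variance proxy matches the denominator of the stated tail bound, while correctly absorbing $m = |n_1 - n_2|$ with the proper sign. The two branches of the statement correspond to whether the $J$-direction or the $gg^\top$-direction governs the spectral gap of $M$ on $g^\perp$, and isolating and treating these two worst-case directions in parallel is the most delicate bookkeeping in the argument.
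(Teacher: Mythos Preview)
Your overall architecture---invoke Lemma~\ref{lem.dual cert}, observe $Mg=0$ deterministically, analyze the expectation to locate the $\mu=(p+q)/2$ case split, then prove concentration---matches the paper. But the concentration step you sketch is not the one the paper uses, and as written it has a real gap.

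The paper does \emph{not} reduce to per-vertex scalar statistics. Instead it applies Weyl's inequality,
\[
\lambda_2(M)\ \ge\ \lambda_2(\mathbb{E}M)\ -\ \|M-\mathbb{E}M\|_{2\to2},
\]
computes $\lambda_2(\mathbb{E}M)$ exactly by diagonalizing the $2\times 2$-block structure of $\mathbb{E}M$ (this is where the two formulas $(\mu-q)n$ and $\tfrac12\big((p-q)n-(2\mu-(p+q))m\big)$ arise as eigenvalues on $g^\perp$), and then controls $\|M-\mathbb{E}M\|_{2\to2}$ by the \emph{matrix} Bernstein inequality, writing $M-\mathbb{E}M$ as a sum over edges of independent mean-zero rank-one matrices $(B_{ij}^{\pm}-p\text{ or }q)(e_i\mp e_j)(e_i\mp e_j)^\top$. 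The prefactor $2n$ is the dimensional factor in matrix Bernstein, not a union bound over vertices, and the denominators $(2p+q)n+(2p-q-\mu)m$ and $(3p+q+2\mu)n+3(p-q)m$ emerge precisely from the matrix-variance proxy $\big\|\sum_{ij}\mathbb{E}M_{ij}^2\big\|_{2\to2}$ plus the $2t/3$ correction with $t=\lambda_2(\mathbb{E}M)$.

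Your proposed route---scalar Bernstein on a vertex-indexed margin statistic, then union-bound over $2n$ events---does not by itself control $\inf_{x\perp g,\,\|x\|=1}x^\top Mx$. The random matrix $M-\mathbb{E}M$ has a genuinely off-diagonal piece $-(A-\mathbb{E}A)$, and no collection of $O(n)$ scalar events on degree-like quantities bounds its operator norm without an additional argument (e.g.\ a net over the sphere, diagonal dominance, or, as the paper does, matrix Bernstein). The sentence ``the vertex-level estimates combine into a uniform strict lower bound on $x^\top Mx$'' is exactly the step that is missing, and it is not routine: a vertex-wise good event constrains $e_v^\top M e_v$ or $(D^+-D^-)_{vv}$, but says nothing about $x^\top(A-\mathbb{E}A)x$ for generic $x$. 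Moreover, even if you patched this, the variance of a single vertex's centered degree is $\Theta((p+q)n)$ with a different $m$-dependence, so scalar Bernstein would not reproduce the stated denominators without further work.
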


\begin{proof}
By Lemma~\ref{lem.dual cert}, it suffices to show that with high probability, the random matrix
\[
Z
:=D^+-D^--\mu(n_1-n_2)\operatorname{diag}(g)-A+\mu J
\]
is positive semidefinite with rank $n-1$.
As established in the proof of Lemma~\ref{lem.dual cert}, it holds that $Zg=0$ almost surely.
It remains to show that the second-smallest eigenvalue $\lambda_2(Z)$ of $Z$ is strictly positive with high probability.
Weyl's inequality (see Theorem~4.3.1 in~\cite{HornJ:85}) gives
\[
\lambda_2(Z)
=\lambda_2(\mathbb{E}Z+Z-\mathbb{E}Z)
\geq \lambda_2(\mathbb{E}Z)-\|Z-\mathbb{E}Z\|_{2\to2}.
\]
To continue, we determine the exact value of $\lambda_2(\mathbb{E}Z)$, and then we use matrix Bernstein to bound $\|Z-\mathbb{E}Z\|_{2\to2}$ in a high-probability event.

To compute $\lambda_2(\mathbb{E}Z)$, it is helpful to assume (without loss of generality) that $S_1=\{1,\ldots,n_1\}$ and $S_2=\{n_1+1,\ldots,n\}$.
We first write the matrix in block form:
\[
\mathbb{E}Z
=\left[\begin{array}{cc}
(a_1-b)I_{n_1}+b1_{n_1}1_{n_1}^\top&c 1_{n_1}1_{n_2}^\top\\
c 1_{n_2}1_{n_1}^\top & (a_2-b)I_{n_2}+b1_{n_2}1_{n_2^\top}
\end{array}\right],
\]
where the matrix entries are given by
\begin{alignat*}{3}
a_1
&=p(n_1-1)-qn_2-\mu(n_1-n_2)+\mu,
&
\qquad
\qquad
b
&=-p+\mu,\\
a_2
&=p(n_2-1)-qn_1+\mu(n_1-n_2)+\mu,
&
c
&=-q+\mu.
\end{alignat*}
From this block form, we see that for each $i\in\{1,2\}$, every vector supported on $S_i$ that is orthogonal to $1_{S_i}$ is an eigenvector with eigenvalue $a_i-b$.
One may simplify to obtain
\[
a_1-b
=(p-\mu)n_1+(\mu-q)n_2,
\qquad
a_2-b
=(p-\mu)n_2+(\mu-q)n_1.
\]
Considering $g=1_{S_1}-1_{S_2}$ is an eigenvector with eigenvalue $0$, the remaining eigenvector $h$ must reside in the span of $1_{S_1}$ and $1_{S_2}$ and simultaneously be orthogonal to $1_{S_1}-1_{S_2}$.
As such, we may take $h=n_21_{S_1}+n_11_{S_2}$, and multiplying by $\mathbb{E}Z$ reveals that the corresponding eigenvalue is $(\mu-q)n$.
Observe that this is the second-smallest eigenvalue provided it is nonnegative and $p-\mu>\mu-q$, i.e., $q\leq \mu<(p+q)/2$.
On the other hand, if $\mu\geq(p+q)/2$, then the second smallest eigenvalue is the smaller of $a_1-b$ and $a_2-b$.
Overall, we have $\lambda_2(\mathbb{E}Z)\leq0$ if $\mu< q$ and otherwise
\[
\lambda_2(\mathbb{E}Z)
=\left\{\begin{array}{cl}
(\mu-q)n&\text{if }q\leq \mu<(p+q)/2,\\
(p-\mu)\max(n_1,n_2)+(\mu-q)\min(n_1,n_2)&\text{if }\mu\geq(p+q)/2.
\end{array}\right.
\]
Notice that in the case $\mu\geq(p+q)/2$, we we may simplify this expression:
\[
\lambda_2(\mathbb{E}Z)
=(p-\mu)\tfrac{n+m}{2}+(\mu-q)\tfrac{n-m}{2}
=\tfrac{p-q}{2}\cdot n-(\mu-\tfrac{p+q}{2})\cdot m.
\]

It remains to obtain a high-probability upper bound on $\|Z-\mathbb{E}Z\|_{2\to2}$.
To accomplish this, we will apply matrix Bernstein.
First, we express $Z-\mathbb{E}Z$ as a random series:
\begin{align*}
Z-\mathbb{E}Z
&=\sum_{\substack{i,j\in[n]\\i<j}}M_{ij}
:=\sum_{\substack{(i,j)\in S_1^2\cup S_2^2\\i<j}}\!\!\!\!\!\!\!(B^+_{ij}-p)(e_i-e_j)(e_i-e_j)^\top
-\sum_{\substack{i\in S_1\\j\in S_2}}(B^-_{ij}-q)(e_i+e_j)(e_i+e_j)^\top.
\end{align*}
Then $\mathbb{E}M_{ij}=0$ and $\|M_{ij}\|_{2\to2}\leq2$ almost surely for every $i,j\in[n]$ with $i<j$.
Next,
\[
\mathbb{E}M_{ij}^2
=\left\{\begin{array}{cl}
2p(1-p)(e_i-e_j)(e_i-e_j)^\top&\text{if }(i,j)\in S_1^2\cup S_2^2,~i<j\\
2q(1-q)(e_i+e_j)(e_i+e_j)^\top&\text{if }(i,j)\in S_1\times S_2,
\end{array}\right.
\]
and so we may write the sum in block form:
\[
\sum_{\substack{i,j\in[n]\\i<j}}\mathbb{E}M_{ij}^2
=\left[\begin{array}{cc}
c_1 I_{n_1} -2p(1-p)1_{n_1}1_{n_1}^\top&2q(1-q) 1_{n_1}1_{n_2}^\top\\
2q(1-q) 1_{n_2}1_{n_1}^\top&c_2 I_{n_2}-2p(1-p)1_{n_2}1_{n_2}^\top
\end{array}\right],
\]
where $c_1=2p(1-p)n_1+2q(1-q)n_2$ and $c_2=2p(1-p)n_2+2q(1-q)n_1$.
As before, every vector supported on $S_i$ that is orthogonal to $1_{S_i}$ is an eigenvector with eigenvalue $c_i$.
The other two eigenvectors reside in the span of $1_{S_1}$ and $1_{S_2}$.
Since $g$ is in the nullspace, the remaining eigenvector is $h=n_21_{S_1}+n_11_{S_2}$.
All together, the eigenvalues are
\[
\left\{\begin{array}{cl}
2p(1-p)n_1+2q(1-q)n_2&\text{with multiplicity }n_1-1\\
2p(1-p)n_2+2q(1-q)n_1&\text{with multiplicity }n_2-1\\
0&\text{with multiplicity }1\\
2q(1-q)n&\text{with multiplicity }1.
\end{array}\right.
\]
As expected, all of these eigenvalues are nonnegative.
Furthermore, since $p>q$, the largest eigenvalue is
\begin{align*}
\bigg\|\sum_{\substack{i,j\in[n]\\i<j}}\mathbb{E}M_{ij}^2\bigg\|_{2\to2}
&=2p(1-p)\max(n_1,n_2)+2q(1-q)\min(n_1,n_2)\\
&\leq2p\max(n_1,n_2)+2q\min(n_1,n_2)
=(p+q)n+(p-q)m.
\end{align*}
We are now ready to apply the matrix Bernstein inequality (see Theorem~1.6.2 in~\cite{Tropp:15}):
\[
\mathbb{P}\{\|Z-\mathbb{E}Z\|_{2\to2}\geq t\}
\leq 2n\operatorname{exp}(-\tfrac{t^2/2}{(p+q)n+(p-q)m+2t/3}).
\]
The result then follows by taking $t:=\lambda_2(\mathbb{E}Z)$ and simplifying.
\end{proof}

As an aside, we point out the following corollary:
If we know $\frac{p+q}{2}$, then we can solve the unbalanced case by semidefinite programming, though more signal is required when the communities are less balanced (at least for this result).

\begin{corollary}
Select $\alpha>\beta>0$ and $\delta\geq0$.
For each $n_1,n_2\in\mathbb{N}$ with $|n_1-n_2|\leq \delta(n_1+n_2)$, put $n=n_1+n_2$, $p=(\alpha \log n)/n$ and $q=(\beta \log n)/n$, draw $G\sim\mathsf{SBM}(n_1,n_2,p,q)$ with planted communities $\{S_1,S_2\}$, let $A$ denote the adjacency matrix of $G$, and put $g=1_{S_1}-1_{S_2}$.
Then $gg^\top$ is the unique solution to $(A,\frac{p+q}{2})$-SDP with probability $1-o(1)$ provided
\[
3(\alpha-\beta)^2
>16(2\alpha+\beta)+24(\alpha-\beta)\delta.
\]
\end{corollary}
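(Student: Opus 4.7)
The plan is to directly invoke Lemma~\ref{lem.success probability} with the specific choice $\mu=(p+q)/2$, which places us squarely in the second case ($\mu\geq (p+q)/2$) of the bound, and then substitute the asymptotic scalings $p=(\alpha\log n)/n$ and $q=(\beta\log n)/n$ to read off when the failure probability is $o(1)$. Note first that the hypothesis $\mu>q$ in Lemma~\ref{lem.success probability} is automatic from $p>q$, so the lemma applies.

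The key simplification is that $\mu=(p+q)/2$ makes $2\mu-(p+q)=0$, so the numerator in the exponent becomes $((p-q)n)^2$, and $2p-q-\mu=\tfrac{3(p-q)}{2}$, so the denominator becomes $(2p+q)n+\tfrac{3(p-q)}{2}m$. Substituting $p=(\alpha\log n)/n$, $q=(\beta\log n)/n$, and using the hypothesis $m\leq \delta n$, the exponent of the Bernstein factor simplifies to
\[
-\frac{3}{16}\cdot\frac{(\alpha-\beta)^2(\log n)^2}{(2\alpha+\beta)\log n+\tfrac{3(\alpha-\beta)}{2}\delta\log n}
=-\frac{3(\alpha-\beta)^2}{16(2\alpha+\beta)+24(\alpha-\beta)\delta}\cdot\log n.
\]
Thus $2n\exp(\cdot)=2n^{1-c}$, where $c:=3(\alpha-\beta)^2/(16(2\alpha+\beta)+24(\alpha-\beta)\delta)$, and the assumption $3(\alpha-\beta)^2>16(2\alpha+\beta)+24(\alpha-\beta)\delta$ is precisely the statement $c>1$, yielding $2n^{1-c}=o(1)$.

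The only subtle point is the monotonicity used when replacing $m$ by its upper bound $\delta n$: increasing $m$ increases the denominator of the exponent (since $\alpha>\beta$ makes the coefficient of $m$ positive), which weakens the bound, and so substituting $m\leq\delta n$ is in the correct direction to yield a valid upper bound on the failure probability. No serious obstacle arises; the argument is entirely a substitution-and-simplification exercise once Lemma~\ref{lem.success probability} is in hand.
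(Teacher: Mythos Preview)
Your proof is correct and follows essentially the same route as the paper's: apply Lemma~\ref{lem.success probability} in the case $\mu=(p+q)/2\geq(p+q)/2$, use $2\mu-(p+q)=0$ and $2p-q-\mu=\tfrac{3}{2}(p-q)$ to simplify the exponent, replace $m$ by $\delta n$ via monotonicity, and then substitute the logarithmic scalings of $p,q$ to see that the failure bound is $2n^{1-c}$ with $c>1$ under the stated hypothesis. The paper's argument is identical in structure and content.
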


\begin{proof}
Put $m=|n_1-n_2|$.
By Lemma~\ref{lem.success probability}, the success probability is at least
\begin{align*}
1-2n\operatorname{exp}(-\tfrac{3}{16}\cdot\tfrac{((p-q)n-(2\mu-(p+q))m)^2}{(2p+q)n+(2p-q-\mu)m})
&=1-2n\operatorname{exp}(-\tfrac{3}{16}\cdot\tfrac{((p-q)n)^2}{(2p+q)n+(3/2)(p-q)m})\\
&\geq1-2n\operatorname{exp}(-\tfrac{3}{16}\cdot\tfrac{((p-q)n)^2}{(2p+q)n+(3/2)(p-q)\delta n})\\
&=1-2\operatorname{exp}(-(\tfrac{3}{16}\cdot\tfrac{(\alpha-\beta)^2}{(2\alpha+\beta)+(3/2)(\alpha-\beta)\delta}-1)\cdot \log n),
\end{align*}
which is $1-o(1)$ by our assumption on $(\alpha,\beta,\delta)$.
\end{proof}

In the special case where $\delta=0$, this threshold matches the guarantee provided in~\cite{AbbeBH:15}. 
Perhaps surprisingly, the true phase transition appears to be independent of $\delta$; see Figure~\ref{fig.phase_transitions}.

\section{Main result}

We are now ready to state our main result, which combines our sketching approach with our solver for the sketched problem:

\begin{theorem}[main result]
\label{thm.main result}
Draw $G\sim\mathsf{SBM}(n/2,n/2,p,q)$ with planted communities $\{S_1,S_2\}$ and with $p=(\alpha\log n)/n$ and $q=(\beta\log n)/n$, where $\alpha>\beta>0$.
Consider the random variable $\mu:=|E|/\binom{n}{2}$, where $E$ denotes the random edge set of $G$.
Next, draw vertices $V\subseteq V(G)$ at random according to a Bernoulli process with rate $\gamma$, and let $A$ denote the adjacency matrix of the random subgraph induced by $V$.
In the event that the solution to $(A,\mu)$-SDP is unique and takes the form $gg^\top$ for some $g\in\{\pm1\}^V$, select $\hat{R}_1\sqcup \hat{R}_2=V$ such that $g=1_{\hat{R}_1}-1_{\hat{R}_2}$, and otherwise let $\{\hat{R}_1,\hat{R}_2\}$ be a random partition of $V$.
Next, let $\hat{S}_i$ denote $\hat{R}_i$ union the vertices in $V(G)\setminus V$ that share more edges with $\hat{R}_i$ than with $\hat{R}_{3-i}$.
Provided
\[
\gamma
>\frac{16}{3}\cdot\frac{2\alpha+\beta}{(\alpha-\beta)^2},
\]
it holds that $\{\hat{S}_1,\hat{S}_2\}=\{S_1,S_2\}$ with probability $1-o(1)$.
\end{theorem}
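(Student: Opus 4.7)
The proof plan is to decompose the overall failure event as $\mathcal{F}_{\text{solve}} \cup \mathcal{F}_{\text{extend}}$, where $\mathcal{F}_{\text{solve}}$ is the event that $(A,\mu)$-SDP fails to uniquely recover $\{R_1,R_2\}$ from the induced subgraph on $V$, and $\mathcal{F}_{\text{extend}}$ is the event that the majority-vote extension produces something other than $\{S_1,S_2\}$ even when fed the correct $\{R_1,R_2\}$. Since the hypothesis $\gamma > \tfrac{16}{3}\cdot\tfrac{2\alpha+\beta}{(\alpha-\beta)^2}$ is strictly stronger than the hypothesis of Lemma~\ref{lem.sketching lemma}, that lemma immediately yields $\mathbb{P}(\mathcal{F}_{\text{extend}}) = o(1)$. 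The remaining task is to bound $\mathbb{P}(\mathcal{F}_{\text{solve}})$ via Lemma~\ref{lem.success probability}.

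For this, I would first introduce a good event $\mathcal{G}$ asserting that $|V| = (1\pm o(1))\gamma n$, that $m := \bigl||R_1|-|R_2|\bigr| = o(|V|)$, and that $\mu - (p+q)/2 = o(p-q)$; each constituent is a sum of independent Bernoullis, so standard Chernoff/Bernstein estimates give $\mathbb{P}(\mathcal{G}^c) = o(1)$. On $\mathcal{G}$, the induced subgraph is conditionally $\mathsf{SBM}(|R_1|,|R_2|,p,q)$, and Lemma~\ref{lem.success probability} applies with parameters in the near-balanced, $\mu\approx(p+q)/2$ regime. Plugging $|V|\approx \gamma n$, $m = o(|V|)$, $\mu \approx (p+q)/2$, $p = \alpha\log n/n$, and $q = \beta\log n/n$ into either case of Lemma~\ref{lem.success probability} (the two cases yield the same leading behavior since their bounds meet continuously at $\mu = (p+q)/2$), the exponent simplifies to $\tfrac{3\gamma(\alpha-\beta)^2}{16(2\alpha+\beta)}\log n$ up to $(1+o(1))$ factors, so the SDP failure probability is $O\bigl(n^{1-3\gamma(\alpha-\beta)^2/(16(2\alpha+\beta))}\bigr)$. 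This is $o(1)$ precisely when the theorem's hypothesis on $\gamma$ holds, and the threshold matches on the nose.

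The main technical obstacle is that $\mu$, being defined from the edges of the full graph $G$, is correlated with the subgraph adjacency $A$ used by the SDP, whereas Lemma~\ref{lem.success probability} treats $\mu$ as a deterministic parameter. I would handle this by conditioning on $V$ together with the edges of $G$ lying outside the induced subgraph; under this conditioning $A$ remains an SBM sample, while $\mu = \mu_{\text{out}} + \binom{n}{2}^{-1}|E(A)|$ is a measurable constant $\mu_{\text{out}}$ plus an $A$-dependent correction of magnitude $O(\gamma^2)$. On $\mathcal{G}$ this correction is dominated by the slack built into $\mathcal{G}$, so $\mu$ lies in a small interval around $(p+q)/2$ for every realization of $A$, and Lemma~\ref{lem.success probability}'s bound applies uniformly across that interval with the same leading exponent computed above. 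Combining the resulting bound on $\mathbb{P}(\mathcal{F}_{\text{solve}} \cap \mathcal{G})$ with $\mathbb{P}(\mathcal{G}^c) = o(1)$ and the earlier bound on $\mathbb{P}(\mathcal{F}_{\text{extend}})$ via a union bound then yields $\mathbb{P}(\mathcal{F}) = o(1)$, as desired.
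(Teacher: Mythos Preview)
Your proposal is correct and follows essentially the same route as the paper: decompose into a solve step and an extend step, dispatch the extend step via Lemma~\ref{lem.sketching lemma}, and handle the solve step by combining a concentration estimate for $\mu$ with Lemma~\ref{lem.success probability} applied to the conditionally-$\mathsf{SBM}$ subgraph. The paper simply packages these two ingredients into separate lemmas (Lemma~\ref{lem.subset sbm recovery given mu} for the conditional SDP analysis over an interval of $\mu$-values, and Lemma~\ref{lem.pick mu} for the concentration of $\mu$), whereas you roll them into a single good-event argument; the resulting exponent $\tfrac{3\gamma(\alpha-\beta)^2}{16(2\alpha+\beta)}\log n$ and threshold on $\gamma$ match exactly.

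One remark: you are right to flag the correlation between $\mu$ and $A$, and your conditioning fix is sound. An even cleaner way to close that gap---and effectively what the paper's Lemma~\ref{lem.subset sbm recovery given mu} is doing when it bounds over the range of $\zeta=(n/\log n)\mu$---is to note from the proof of Lemma~\ref{lem.success probability} that the centered matrix $Z-\mathbb{E}Z$ does not depend on $\mu$ at all (the $\mu$-terms in $Z$ are deterministic), so a single matrix-Bernstein bound controls $\|Z-\mathbb{E}Z\|_{2\to2}$ simultaneously for every $\mu$ in the interval, and one only needs $\min_{\mu}\lambda_2(\mathbb{E}Z_\mu)$ over that interval to exceed it.
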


In words, if $\alpha\gg\beta$, then for small choices of $\gamma$, we can sketch down to a subgraph with approximately $\gamma n$ vertices before solving the SDP.
Since the runtime of the SDP is sensitive to the size of the problem instance, this promises to provide a substantial speedup.
To prove Theorem~\ref{thm.main result}, we first need a version of Lemma~\ref{lem.success probability} that accounts for the randomness in our sketching process:

\begin{lemma}
\label{lem.subset sbm recovery given mu}
Draw $G\sim\mathsf{SBM}(n/2,n/2,p,q)$ with planted communities $\{S_1,S_2\}$ and with $p=(\alpha\log n)/n$ and $q=(\beta\log n)/n$, where $\alpha>\beta>0$.
Draw vertices $V\subseteq V(G)$ at random according to a Bernoulli process with rate $\gamma$.
Let $A$ denote the adjacency matrix of the random subgraph induced by $V$.
Suppose there exists $\epsilon$ such that $\beta < \epsilon < (n/\log n)\mu  < 2\alpha-\beta$, and put $\eta:=(\tfrac{\alpha+\beta}{2}-\epsilon)_+$.
Then with probability $1-o(1)$, the solution to $(A,\mu)$-SDP is unique and identifies $\{S_1\cap V,S_2\cap V\}$ provided
\begin{equation}
\label{eq.bound on gamma}
\gamma
>\frac{16}{3}\cdot\frac{2\alpha+\beta-\eta}{(\alpha-\beta-2\eta)^2}.
\end{equation}
\end{lemma}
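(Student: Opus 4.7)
The plan is to reduce to Lemma~\ref{lem.success probability} applied to the subgraph induced by $V$. First I would condition on $V$: since $N_1 := |R_1|$ and $N_2 := |R_2|$ are independent $\mathrm{Bin}(n/2,\gamma)$ random variables, standard Chernoff concentration gives $N := N_1+N_2 = (1+o(1))\gamma n$ and $M := |N_1-N_2| = o(n)$ with probability $1-e^{-\Omega(n)}$. Conditional on this good event, the induced subgraph is distributed as $\mathsf{SBM}(N_1,N_2,p,q)$, and Lemma~\ref{lem.success probability} applies with $(n,m,n_1,n_2)$ replaced by $(N,M,N_1,N_2)$. The required hypothesis $\mu>q$ follows from $(n/\log n)\mu>\epsilon>\beta$.

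Writing $\mu^* := (n/\log n)\mu \in (\epsilon,\,2\alpha-\beta)$, I would split into two cases. When $\mu^*\geq (\alpha+\beta)/2$, the second bound of Lemma~\ref{lem.success probability} applies: using $\mu^*\leq 2\alpha-\beta$ to control $(2\mu-(p+q))M$ along with $N=(1+o(1))\gamma n$ and $M=o(n)$, the Bernstein exponent simplifies to $-\tfrac{3(\alpha-\beta)^2}{16(2\alpha+\beta)}\gamma\log n \cdot (1+o(1))$, giving failure probability $o(1)$ whenever $\gamma > 16(2\alpha+\beta)/[3(\alpha-\beta)^2]$. A short derivative check shows $f(\eta) := (2\alpha+\beta-\eta)/(\alpha-\beta-2\eta)^2$ is increasing on $[0,(\alpha-\beta)/2)$, so this case's condition is strictly weaker than \eqref{eq.bound on gamma}. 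When $\mu^* < (\alpha+\beta)/2$ (which forces $\eta>0$), the first bound applies, and the key observation is that the ratio $(\mu-q)^2/(3p+q+2\mu)$ is monotonically increasing in $\mu$ on $[q,\infty)$. Its infimum over $\mu^*\in(\epsilon,(\alpha+\beta)/2)$ is therefore attained as $\mu^*\to\epsilon=(\alpha+\beta)/2-\eta$, where $\mu-q = (\alpha-\beta-2\eta)\log n/(2n)$ and $3p+q+2\mu = 2(2\alpha+\beta-\eta)\log n/n$. Substituting yields the exponent $-\tfrac{3(\alpha-\beta-2\eta)^2}{16(2\alpha+\beta-\eta)}\gamma\log n \cdot (1+o(1))$, which is $o(1)$ precisely under \eqref{eq.bound on gamma}.

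The main obstacle is the second case $\mu^* < (\alpha+\beta)/2$, where a naive upper bound $3p+q+2\mu \leq 4p+2q$ would yield only $\gamma > 16(2\alpha+\beta)/[3(\alpha-\beta-2\eta)^2]$, strictly stronger than \eqref{eq.bound on gamma}. The remedy is to use the hypothesis $\mu^* > \epsilon$ together with the monotonicity of $(\mu-q)^2/(3p+q+2\mu)$, which localizes the worst case of the Bernstein bound at $\mu^*=\epsilon$ and produces exactly the tight $(2\alpha+\beta-\eta)$ factor in the denominator of the threshold.
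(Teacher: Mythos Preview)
Your proposal is correct and follows essentially the same route as the paper: condition on the sizes $K_1,K_2$, restrict to a high-probability event where they concentrate near $\gamma n/2$, then invoke Lemma~\ref{lem.success probability} on the induced subgraph and split into the two cases $\mu\gtrless (p+q)/2$, using monotonicity of $\zeta\mapsto(\zeta-\beta)^2/(3\alpha+\beta+2\zeta)$ to localize the worst case at $\mu^*=\epsilon$ in the second case. Your presentation is in fact slightly cleaner in one spot: in the case $\mu\geq(p+q)/2$ the paper writes the assumed bound as $\tfrac{16(2\alpha+\beta-\eta)}{3(\alpha-\beta-2\eta)^2}=\tfrac{16(2\alpha+\beta)}{3(\alpha-\beta)^2}$, which tacitly relies on the monotonicity of $\eta\mapsto(2\alpha+\beta-\eta)/(\alpha-\beta-2\eta)^2$ that you make explicit.
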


Before proving this lemma, we provide the idea of the proof.
There are two sources of randomness, namely, the problem instance $G\sim\mathsf{SBM}(n/2,n/2,p,q)$ and the sketching variables $B_1,\ldots,B_n\sim\mathsf{Bernoulli}(\gamma)$, all of which are independent.
Consider the subgraph $G[V]$ of $G$ induced by $V:=\{i\in[n]:B_i=1\}$.
Conditioned on the event $\{V=V_0\}$, we have
\[
G[V_0]
\sim\mathsf{SBM}(|S_1\cap V_0|,|S_2\cap V_0|,p,q)
\]
with planted communities $\{S_1\cap V_0,S_2\cap V_0\}$.
By virtue of this conditioning, we may appeal to Lemma~\ref{lem.success probability}.
Since our bound on the success probability only depends on the sizes of $S_1\cap V$ and $S_2\cap V$, we can simply condition on these sizes.
The sizes are similar with high probability, in which case the bound in Lemma~\ref{lem.success probability} gives that we succeed with high probability.
Since Lemma~\ref{lem.success probability} is broken up into cases, it is a somewhat technical exercise to make this last statement rigorous:

\begin{proof}[Proof of Lemma~\ref{lem.subset sbm recovery given mu}]
Put $g:=1_{S_1\cap V}-1_{S_2\cap V}$, and let $\mathcal{S}$ denote the event that $gg^\top$ is the unique solution to $(A,\mu)$-SDP.
We condition on $K_1:=|S_1\cap V|$ and $K_2:=|S_2\cap V|$, which are independent of each other:
\begin{equation}
\label{eq.sum to split}
\mathbb{P}(\mathcal{S}^c)
=\sum_{k_1=0}^{n/2}\sum_{k_2=0}^{n/2}\mathbb{P}(\mathcal{S}^c|\{K_1=k_1\}\cap\{K_2=k_2\})\cdot\mathbb{P}\{K_1=k_1\}\cdot\mathbb{P}\{K_2=k_2\}.
\end{equation}
Notice that $K_1,K_2\sim\mathsf{Binomial}(\tfrac{n}{2},\gamma)$, and so we expect these random variables to concentrate at $\tfrac{\gamma n}{2}$.
With this intuition, we take $\delta>0$ (to be selected later) and denote
\[
I:=\{k\in\mathbb{N}:\tfrac{(\gamma-\delta)n}{2}\leq k\leq\tfrac{(\gamma+\delta)n}{2}\}.
\]
We use this interval to split the sum \eqref{eq.sum to split} into two parts:
\begin{align}
\mathbb{P}(\mathcal{S}^c)
\nonumber
&=\sum_{(k_1,k_2)\in I\times I}+\sum_{\substack{k_1,k_2\in\{0,\ldots,n/2\}\\(k_1,k_2)\not\in I\times I}}\\
\nonumber
&\leq \sum_{(k_1,k_2)\in I\times I}\mathbb{P}(\mathcal{S}^c|\{K_1=k_1\}\cap\{K_2=k_2\})\cdot\mathbb{P}\{K_1=k_1\}\cdot\mathbb{P}\{K_2=k_2\}\\
\nonumber
&\qquad+2\cdot\mathbb{P}\{|K-\tfrac{\gamma n}{2}|>\tfrac{\delta n}{2}\}\\
\nonumber
&\leq \Big(\mathbb{P}\{K\in I\}\Big)^2\cdot\max_{(k_1,k_2)\in I\times I}\mathbb{P}(\mathcal{S}^c|\{K_1=k_1\}\cap\{K_2=k_2\})+2\cdot\mathbb{P}\{|K-\tfrac{\gamma n}{2}|>\tfrac{\delta n}{2}\}\\
\label{eq.two terms to bound}
&\leq \max_{(k_1,k_2)\in I\times I}\mathbb{P}(\mathcal{S}^c|\{K_1=k_1\}\cap\{K_2=k_2\})+2\cdot\mathbb{P}\{|K-\tfrac{\gamma n}{2}|>\tfrac{\delta n}{2}\}
\end{align}
where $K\sim\mathsf{Binomial}(\tfrac{n}{2},\gamma)$.
We bound the first term above by applying Lemma~\ref{lem.success probability}, and we bound the second term using Bernstein's inequality.

For the first term, first assume that $\mu\geq\frac{p+q}{2}$.
Then Lemma~\ref{lem.success probability} gives
\[
\mathbb{P}(\mathcal{S}^c|\{K_1=k_1\}\cap\{K_2=k_2\})
\leq 2(k_1+k_2)\operatorname{exp}(-\tfrac{3}{16}\cdot\tfrac{((p-q)(k_1+k_2)-(2\mu-(p+q))|k_1-k_2|)^2}{(2p+q)(k_1+k_2)+(2p-q-\mu)|k_1-k_2|}).
\]
We will select $\delta>0$ small enough so that
\[
(\alpha-\beta)(\gamma-\delta)
\geq (2(2\alpha-\beta)-(\alpha+\beta))\delta,
\]
which in turn implies that every $k_1,k_2\in I$ satisfies
\begin{align*}
(p-q)(k_1+k_2)
&\geq(\alpha-\beta)\tfrac{\log n}{n}\cdot (\gamma-\delta)n\\
&\geq (2(2\alpha-\beta)-(\alpha+\beta))\delta \cdot \log n
\geq (2\mu-(p+q))|k_1-k_2|.
\end{align*}
With this information, we may bound the exponent:
\[
\tfrac{((p-q)(k_1+k_2)-(2\mu-(p+q))|k_1-k_2|)^2}{(2p+q)(k_1+k_2)+(2p-q-\mu)|k_1-k_2|}
\geq\tfrac{((\alpha-\beta)(\gamma-\delta)
- (2(2\alpha-\beta)-(\alpha+\beta))\delta)^2}{(2\alpha+\beta)(\gamma+\delta)+(2\alpha-\beta-\epsilon)\delta}\cdot\log n.
\]
All together, for every $k_1,k_2\in I$, we have
\[
\mathbb{P}(\mathcal{S}^c|\{K_1=k_1\}\cap\{K_2=k_2\})
\leq 2(\gamma+\delta)\operatorname{exp}((1-\tfrac{3}{16}\cdot\tfrac{((\alpha-\beta)(\gamma-\delta)
- (2(2\alpha-\beta)-(\alpha+\beta))\delta)^2}{(2\alpha+\beta)(\gamma+\delta)+(2\alpha-\beta-\epsilon)\delta})\log n).
\]
Thanks to our assumption
\[
\gamma
>\tfrac{16(2\alpha+\beta-\eta)}{3(\alpha-\beta-2\eta)^2}
=\tfrac{16(2\alpha+\beta)}{3(\alpha-\beta)^2},
\]
this bound is $o(1)$ for every sufficiently small $\delta>0$.
Next, we consider the case in which $\mu<\frac{p+q}{2}$.
Put $\zeta:=\mu n/\log n$.
Then Lemma~\ref{lem.success probability} gives
\begin{align*}
\mathbb{P}(\mathcal{S}^c|\{K_1=k_1\}\cap\{K_2=k_2\})
&\leq 2(k_1+k_2)\operatorname{exp}(-\tfrac{3}{2}\cdot\tfrac{((\mu-q)(k_1+k_2))^2}{(3p+q+2\mu)(k_1+k_2)+3(p-q)|k_1-k_2|})\\
&\leq 2(\gamma+\delta)\operatorname{exp}((1-\tfrac{3}{2}\cdot\tfrac{((\zeta-\beta)(\gamma-\delta))^2}{(3\alpha+\beta+2\zeta)(\gamma+\delta)+3(\alpha-\beta)\delta}) \log n)\\
&\leq 2(\gamma+\delta)\operatorname{exp}((1-\tfrac{3}{2}\cdot\tfrac{((\epsilon-\beta)(\gamma-\delta))^2}{(3\alpha+\beta+2\epsilon)(\gamma+\delta)+3(\alpha-\beta)\delta}) \log n),
\end{align*}
where the last step applies $\zeta>\epsilon$ and the fact that the map $\zeta\mapsto\tfrac{((\zeta-\beta)(\gamma-\delta))^2}{(3\alpha+\beta+2\zeta)(\gamma+\delta)+3(\alpha-\beta)\delta}$ is increasing over $\zeta\in[\beta,\frac{\alpha+\beta}{2}]$.
Thanks to the assumption
\[
\gamma
>\tfrac{16(2\alpha+\beta-\eta)}{3(\alpha-\beta-2\eta)^2}
=\tfrac{16(2\alpha+\beta-(\frac{\alpha+\beta}{2}-\epsilon))}{3(\alpha-\beta-2(\frac{\alpha+\beta}{2}-\epsilon))^2}
=\tfrac{2}{3}\cdot\tfrac{3\alpha+\beta+2\epsilon}{(\epsilon-\beta)^2},
\]
this bound is $o(1)$ for every sufficiently small $\delta>0$.

At this point, we know that the first term in \eqref{eq.two terms to bound} is $o(1)$.
For the second term, note that $K-\frac{\gamma n}{2}$ is a sum of independent, mean zero random variables $X_i$ such that $|X_i|\leq1$ almost surely and $\operatorname{Var}(X_i)=\gamma(1-\gamma)\leq \gamma$.
As such, Bernstein's inequality for bounded variables gives $\mathbb{P}\{|K-\tfrac{\gamma n}{2}|>\tfrac{\delta n}{2}\}\leq 2\operatorname{exp}(-\tfrac{(\delta n/2)^2/2}{n\gamma^2+(\delta n/2)/3})=o(1)$, as desired.
\end{proof}

Next, we provide a way of estimating $\frac{p+q}{2}$ so as to select $\mu$:

\begin{lemma}
\label{lem.pick mu}
Draw $G\sim\mathsf{SBM}(n/2,n/2,p,q)$ with planted communities $\{S_1,S_2\}$ and with $p=(\alpha\log n)/n$ and $q=(\beta\log n)/n$, and consider the random variable $\mu:=|E|/\binom{n}{2}$, where $E$ denotes the random edge set of $G$.
Then for any fixed $c>0$, it holds that
\[
\Big|\mu-\frac{p+q}{2}\Big|
\leq \frac{c\log n}{n^{3/2}}
\]
with probability $1-o(1)$.
\end{lemma}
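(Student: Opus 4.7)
The plan is to express $\mu = |E|/\binom{n}{2}$ as a normalized sum of $\binom{n}{2}$ independent Bernoulli random variables, use Bernstein's inequality to concentrate $|E|$ around its mean, and separately check that $\mathbb{E}\mu$ is already very close to $(p+q)/2$.

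First, I would compute $\mathbb{E}|E|$ exactly by sorting pairs of vertices by cluster membership. There are $2\binom{n/2}{2}$ within-cluster pairs, each realized as an edge with probability $p$, and $(n/2)^2$ across-cluster pairs, each realized with probability $q$. Dividing by $\binom{n}{2}$ and simplifying gives
\[
\mathbb{E}\mu - \tfrac{p+q}{2} = \tfrac{q-p}{2(n-1)} = O\!\left(\tfrac{\log n}{n^2}\right),
\]
which is asymptotically negligible compared to the target tolerance $c\log n/n^{3/2}$.

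Second, I would bound $|\mu - \mathbb{E}\mu|$ by applying Bernstein's inequality for bounded variables to $|E| - \mathbb{E}|E|$. Each centered Bernoulli has absolute value at most $1$, and the total variance is at most $\binom{n}{2}\max(p,q) = O(n\log n)$. I would choose the deviation threshold for $|E|$ to be of order $\log n\cdot\sqrt{n}$, which corresponds after dividing by $\binom{n}{2} = \Theta(n^2)$ to order $\log n/n^{3/2}$ in $\mu$. The Bernstein exponent then has order $\log^2 n \cdot n / (n\log n) = \log n$, so the failure probability is polynomially small in $n$, hence $o(1)$.

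Finally, a triangle inequality combines the two estimates: with probability $1-o(1)$, $|\mu - (p+q)/2| \leq |\mu - \mathbb{E}\mu| + |\mathbb{E}\mu - (p+q)/2| \leq c\log n/n^{3/2}$. There is no real obstacle here — the only bookkeeping is verifying that the chosen Bernstein deviation is much larger than the standard deviation $\sqrt{n\log n}$ yet still within the sub-Gaussian regime, which is automatic because $\log n\cdot\sqrt{n}/\sqrt{n\log n} = \sqrt{\log n}\to\infty$ while $\log n\cdot\sqrt{n}$ is far below the linear term $n\log n$.
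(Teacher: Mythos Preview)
Your proposal is correct and follows essentially the same route as the paper: compute $\mathbb{E}\mu$ exactly to see that $\mathbb{E}\mu-(p+q)/2=(q-p)/(2(n-1))=O((\log n)/n^2)$, apply Bernstein's inequality to $|E|-\mathbb{E}|E|$ using the variance bound $\binom{n}{2}\max(p,q)=O(n\log n)$ to get a failure probability that is polynomially small in $n$, and finish with the triangle inequality. The paper even makes the same choice of deviation scale, taking $t=\tfrac{c\log n}{2n^{3/2}}$ for $\mu$ (equivalently order $\sqrt{n}\log n$ for $|E|$), so there is no meaningful difference.
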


\begin{proof}
Let $A$ denote the adjacency matrix of $G$.
It is helpful to assume (without loss of generality) that $S_1=\{1,\ldots,n_1\}$ and $S_2=\{n_1+1,\ldots,n\}$.
Then
\[
\mathbb{E}A
=\left[\begin{array}{cc}
p(J-I)& qJ\\
qJ&p(J-I)
\end{array}\right],
\]
where $J$ and $I$ denote the all-ones and identity matrices of order $n/2$.
Then
\[
\mathbb{E}\mu
=\tfrac{1}{n(n-1)}\sum_{i,j=1}^n \mathbb{E}A_{ij}
=\tfrac{n-2}{2(n-1)}\cdot p+\tfrac{n}{2(n-1)}\cdot q
=\tfrac{p+q}{2}-\tfrac{p-q}{2(n-1)}.
\]
Also, the random variable $\frac{n(n-1)}{2}\mu$ is a sum of $\tfrac{n(n-1)}{2}$ independent Bernoulli random variables, $(\tfrac{n}{2})(\tfrac{n}{2}-1)$ with mean $p$ and $(\tfrac{n}{2})^2$ with mean $q$.
After subtracting the mean, each of these random variables has absolute value at most $1$ almost surely, and the variance of the sum is
\[
\operatorname{Var}(\tfrac{n(n-1)}{2}\mu)
=(\tfrac{n}{2})(\tfrac{n}{2}-1)p(1-p)+(\tfrac{n}{2})^2q(1-q)
\leq\tfrac{n(n-1)}{2}\max(p,q).
\]
We may therefore apply Bernstein's inequality for bounded variables:
\begin{align*}
\mathbb{P}\{|\mu-\mathbb{E}\mu|>t\}
&=\mathbb{P}\{|\tfrac{n(n-1)}{2}\mu-\mathbb{E}\tfrac{n(n-1)}{2}\mu|>\tfrac{n(n-1)}{2}t\}\\
&\leq 2\operatorname{exp}\bigg(-\frac{(\tfrac{n(n-1)}{2}t)^2/2}{\tfrac{n(n-1)}{2}\max(p,q)+(\tfrac{n(n-1)}{2}t)/3}\bigg)
= 2\operatorname{exp}(-\tfrac{n(n-1)}{2}\cdot\tfrac{t^2/2}{\max(p,q)+t/3}).
\end{align*}
Taking $t=\tfrac{c\log n}{2n^{3/2}}$ then gives
\[
\mathbb{P}\{|\mu-\mathbb{E}\mu|>t\}
\leq 2\operatorname{exp}(-(1-o(1))\cdot\tfrac{c^2}{16\max(\alpha,\beta) }\cdot\log n)
=o(1).
\]
All together, with probability $1-o(1)$, it holds that
\[
|\mu-\tfrac{p+q}{2}|
\leq|\mu-\mathbb{E}\mu|+|\mathbb{E}\mu-\tfrac{p+q}{2}|
\leq\tfrac{c \log n}{2n^{3/2}}+\tfrac{|\alpha-\beta|\log n}{2n(n-1)}
\leq\tfrac{c \log n}{n^{3/2}}.
\qedhere
\]
\end{proof}

\begin{proof}[Proof of Theorem~\ref{thm.main result}]
Select $\eta>0$ such that \eqref{eq.bound on gamma} is satisfied and put $\epsilon:=\frac{\alpha+\beta}{2}-\eta$.
By Lemma~\ref{lem.pick mu}, it holds that $(n/\log n)\mu\in(\epsilon,2\alpha-\beta)$ with probability $1-o(1)$.
As such, by Lemma~\ref{lem.subset sbm recovery given mu}, we have $\{\hat{R}_1,\hat{R}_2\}=\{S_1\cap V,S_2\cap V\}$ with probability $1-o(1)$.
The result then follows from Lemma~\ref{lem.sketching lemma}.
\end{proof}

\section{Discussion}

In this paper, we applied the sketch-and-solve approach to systematically reduce the computational burden of solving minimum bisection by semidefinite programming when the problem instance exhibits more signal.
Figure~\ref{fig.phase_transition_times} illustrates that our approach provides a substantial speedup over the original semidefinite programming approach.
For this figure, we selected $\gamma$ assuming access to $\alpha$ and $\beta$ using the following rule:
\[
\gamma=\min\Big\{1,\frac{4}{(\sqrt{\alpha}-\sqrt{\beta})^2}\Big\}.
\]
Based on our experiments, we expect that our sketch-and-solve approach will succeed for large $n$ in the regime
\begin{equation}
\label{eq.guess for sketching threshold}
\gamma
>\frac{2}{(\sqrt{\alpha}-\sqrt{\beta})^2},
\end{equation}
which corresponds to Proposition~\ref{prop.balanced sbm} in the special case where $\gamma=1$.

Our investigation suggests a few opportunities for future work.
First, while Theorem~\ref{thm.main result} provides a sufficient condition on $\gamma$, the apparent sketching threshold \eqref{eq.guess for sketching threshold} warrants an explanation.
Also, we did not provide a method for selecting the sketching parameter $\gamma$ when $\alpha$ and $\beta$ are unknown.
In practice, one can test whether a given $\gamma$ is too small by running the algorithm multiple times and observing whether substantially different clusterings emerge.
Presumably, one can identify a good choice of $\gamma$ by performing this test for increasing values of $\gamma$, but it would be nice to have a principled approach to select $\gamma$.
Next, Figure~\ref{fig.phase_transitions} indicates that the phase transition from Proposition~\ref{prop.balanced sbm} emerges when solving exact recovery under the \textit{unbalanced} stochastic block model.
A more detailed analysis is required to explain this phenomenon.
To obtain this phase transition empirically, we put $\mu=\frac{p+q}{2}$, which suggests another question:
Given $G\sim\mathsf{SBM}(n_1,n_2,p,q)$ with unknown $n_1,n_2,p,q$, how does one estimate $\frac{p+q}{2}$?
In our setting, the fact that $n_1=n_2$ before sketching helped us to estimate $\frac{p+q}{2}$ in Lemma~\ref{lem.pick mu}.
Finally, it would be interesting if sketching SDPs in other settings (such as geometric clustering) also supports our meta-claim that clustering is easier in the presence of more signal.

\section*{Acknowledgments}

DGM was partially supported by AFOSR FA9550-18-1-0107 and
NSF DMS 1829955.

\end{document}